\documentclass[english,letterpaper,11pt,reqno]{amsart}

\setlength{\parindent}{0pt}  %No indentation 
\usepackage{appendix}
\usepackage{amsbsy}
\usepackage{amsfonts}
\usepackage{amsmath}
\usepackage{amssymb}
\usepackage{amsthm}
\usepackage{graphicx}
\usepackage{ifthen}
\usepackage{textcomp}
\usepackage{centernot}
\usepackage{mathtools}
\usepackage{stmaryrd}
\usepackage{mathrsfs}
\usepackage{calrsfs}
\usepackage{enumitem, color}
\usepackage{dsfont}
\usepackage[bookmarksnumbered,colorlinks]{hyperref}
\hypersetup{colorlinks=true, linkcolor=blue, citecolor=blue}
\emergencystretch20pt   % More blank space to put in to lines

%%%%%%%%%%%%%%%%%%%%%%%%%%%%%%%%%%%%%%%%%%%%%%%%%%%

\newtheorem{theorem}{Theorem}

\newtheorem{lemma}{Lemma}
\newtheorem{corollary}{Corollary}

\newtheorem{definition}{Definition}
\newtheorem*{example}{Example}

\newcommand{\beqa}{\begin{eqnarray}}
\newcommand{\beq}{\begin{equation}}
\newcommand{\eeqa}{\end{eqnarray}}
\newcommand{\eeq}{\end{equation}}
\newcommand{\np}{\vskip 6pt \noindent}

\newcommand\ip[3]{g({#1},{#2})}  %inner product
  %curly theta
  %curly phi
\newcommand{\lra}{\longrightarrow}

\newcommand{\RR}{\mathbb{R}}

\newcommand\cc{\mathscr{C}}

\newcommand\vv[1]{{\boldsymbol {#1}}} %bold vector
\newcommand\xx{\boldsymbol x}

%%%%%%%%%%%%%%%%%%%%%%%%%%%%%%%
%This new command gives a better way to draw lines over letters
\makeatletter
\newsavebox\myboxA
\newsavebox\myboxB
\newlength\mylenA
\newcommand*\xoverline[2][0.75]{%
    \sbox{\myboxA}{$\m@th#2$}%
    \setbox\myboxB\null% Phantom box
    \ht\myboxB=\ht\myboxA%
    \dp\myboxB=\dp\myboxA%
    \wd\myboxB=#1\wd\myboxA% Scale phantom
    \sbox\myboxB{$\m@th\overline{\copy\myboxB}$}%  Overlined phantom
    \setlength\mylenA{\the\wd\myboxA}%   calc width diff
    \addtolength\mylenA{-\the\wd\myboxB}%
    \ifdim\wd\myboxB<\wd\myboxA%
       \rlap{\hskip 0.5\mylenA\usebox\myboxB}{\usebox\myboxA}%
    \else
        \hskip -0.5\mylenA\rlap{\usebox\myboxA}{\hskip 0.5\mylenA\usebox\myboxB}%
    \fi}
\makeatother

%%%%%%%%%%%%%%%%%%%%%%%%%%%%%%%%%%%%%%%%%%%%%%%%%%%

\begin{document}
\title[]{The geometry of gravitational lensing magnification}
\author[]{Amir Babak Aazami and Marcus C. Werner}
\address{Kavli IPMU (WPI), UTIAS\hfill\break\indent
The University of Tokyo\hfill\break\indent
Kashiwa, Chiba 277-8583, Japan}
\email{amir.aazami@ipmu.jp}

\address{Yukawa Institute for theoretical physics\hfill\break\indent
Kyoto University\hfill\break\indent
Kyoto 606-8502, Japan}
\email{werner@yukawa.kyoto-u.ac.jp}

\maketitle
\begin{abstract}
We present a definition of unsigned magnification in gravitational lensing valid on arbitrary convex normal neighborhoods of time oriented Lorentzian manifolds.  This definition is a function defined at any two points along a null geodesic that lie in a convex normal neighborhood, and foregoes the usual notions of lens and source planes in gravitational lensing.  Rather, it makes essential use of the van Vleck determinant, which we present via the exponential map, and Etherington's definition of luminosity distance for arbitrary spacetimes.  We then specialize our definition to spacetimes, like Schwarzschild's, in which the lens is compact and isolated, and show that our magnification function is monotonically increasing along any geodesic contained within a convex normal neighborhood.
\end{abstract}

%%%%%%%%%%%%%%%%%%
\section{Introduction}
Flux is an important observable in gravitational lensing. Suppose $F$ is the observed flux of a light source, which has been increased due to the gravitational focusing by an intervening massive lensing object, and that the hypothetical flux in the absence of the lens would be $F_0$. Then the magnification factor due to gravitational lensing is given by
\beqa
\label{eqn:mag}
\mu\ =\ \frac{F}{F_0},
\eeqa
ignoring image orientation.  This magnification factor is not regarded as an observable because the intrinsic flux $F_0$ without the lens is unknown in general. The recent report of the first strongly lensed type Ia supernova \cite{quimby13} proves an exception to this rule because such supernovae are standardizable candles. Moreover, it is clear that many more examples of strongly lensed type Ia supernovae will be found by upcoming surveys, so cases of observable lensing magnification will become more common.
\np
From a theoretical point of view, the magnification factor $\mu$ is usually considered in terms of the quasi-Newtonian approximation for gravitational lensing (see, e.g., \cite{schneider92,petters}), which treats light rays as piecewise straight lines in Euclidean space. However, the proper arena for gravitational lensing is, of course, the Lorentzian spacetime geometry of General Relativity. Thus, it is desirable to generalize the definition of magnification to a spacetime setting and better understand its geometrical meaning.  The aim of this article is to do this by reinterpreting the classic definition of luminosity distance in spacetimes due to Etherington \cite{etherington33} in terms of lensing magnification and the van Vleck determinant \cite{van28}, which we express here in terms of the exponential map.  Indeed, we are not the first to promote the use of the exponential map in gravitational lensing; see, e.g., the viewpoint offered in \cite{reimberg13}.  For the spacetime view of gravitational lensing in general, see, e.g., \cite{perlick00}.

\np
In our understanding of the van Vleck determinant, we have been greatly aided by the two very comprehensive treatments \cite{Visser,Poisson}.  Indeed, most of Sections \ref{sec:wf} and \ref{sec:vv} below, on Synge's world function and the van Vleck determinant, can be found in \cite{Visser} and \cite{Poisson}, with one exception, however, that warrants their inclusion here: namely, our use of the exponential map to compute the van Vleck determinant.  For this reason, our notation, and several of our proofs, are noticeably different from most of the existing literature.  Our definition of magnification (Definition \ref{def:mag}) and subsequent focusing theorem (Theorem \ref{cor:ft}) are to be found in Section \ref{sec:magdef}.

%%%%%%%%%%%%%%%%%%
\section{Overview of the exponential map and normal coordinates}
\label{sec:exp}
Let $(M,g)$ be a connected time oriented four-dimensional Lorentzian manifold, with $g$ having signature $(-,+,+,+)$.  Let $\cc$ be a \emph{convex normal neighborhood of $p \in M$,} that is, a neighborhood any two points $q,q'$ of which are connected by a \emph{unique} distance-minimizing geodesic $\alpha_{qq'}$ lying entirely in $\cc$, though there may well be other geodesics between $q$ and $q'$ that leave and then reenter $\cc$; the $\alpha_{qq'}$ are usually referred to as ``radial geodesics," and we will adopt this terminology henceforth.  Furthermore, one can arrange it so that for each $q \in \cc$, $\cc$ is contained in the normal neighborhood of $q$ provided by the exponential map at $q$; that such a $\cc$ exists at every point on a Lorentzian manifold is proved, e.g., in \cite[pp.~133\,-136]{hicks}.  Thus normal coordinates $(x^i)$ centered at any point $p \in \cc$ cover all of $\cc$, and we will in fact describe Synge's ``world function" below primarily in terms of such coordinates.  Next, ``$\alpha_{qq'}$" will always denote the unique radial geodesic in $\cc$ from $q$ to $q'$, ``$\alpha_{qq'}'(t)$" will denote its tangent vector at the point $\alpha_{qq'}(t)$, and we henceforth adopt the convention of parametrizing our radial geodesics to run for unit affine parameter: $\alpha_{qq'}\colon [0,1] \lra \cc$, with $\alpha_{qq'}(0) = q$ and $\alpha_{qq'}(1) = q'$.  Also, we adopt the convention of writing both a curve, and its coordinate representation in a coordinate chart, using the same symbol.  Finally, the Einstein summation convention will be assumed throughout, with indices labeled $0,1,2,3$.

\np
Because of the essential role played for us by the exponential map, we now briefly review some of its properties.   Thus, fix $p \in \cc$ and recall that any point $q \in \cc$ is given by $q = \gamma_V(1) := \text{exp}_p(V)$ for some unique vector $V \in T_pM$, where $\gamma_V$ is the unique geodesic starting at $p$ in the direction $V$, and where $\text{exp}_p$ denotes the exponential map at $p$.  Since $\text{exp}_p$ is a diffeomorphism from a neighborhood of $0 \in T_pM$ to $\cc$, any choice of orthonormal basis $\{E_0|_p,\dots,E_3|_p\} \subset T_pM$ provides us with ``normal" coordinates $(x^i)$ defined with respect to that basis.  Indeed, expressing any $X \in T_pM$ as $X = X^iE_i|_p$, the diffeomorphism $E\colon T_pM \lra \RR^n$ sending $X \mapsto (X^0,\dots,X^3)$ composes with $\exp_p^{-1}$ to give
\beqa
\label{eqn:nc1}
q = \gamma_V(1)\ \xmapsto{\,\text{exp}_p^{-1}}\ V\ =\ V^iE_i|_p\ \xmapsto{\ E\ }\ \underbrace{(V^0,\dots,V^3)\ :=\ x(q)}_{(x^i)~\text{coordinates of}~q}.
\eeqa
Thus $x^i(q) = V^i$, and because geodesics $\gamma_V$ have the scaling property $\gamma_V(t) = \gamma_{tV}(1)$ whenever either side is defined, the geodesic $\gamma_V$ in the coordinates $(x^i)$ is given by
\beqa
\label{eqn:normal0}
\gamma_V(t)\ =\ (tV^0,\dots,tV^3).
\eeqa
Of course, $\gamma_V\colon[0,1] \lra \cc$ must be our radial geodesic $\alpha_{pq}\colon [0,1] \lra \cc$.  Furthermore, the normal coordinate basis vectors $\{\partial/\partial x^0,\dots,\partial/\partial x^3\}$ defined with respect to the $(x^i)$ satisfy, by construction,
$$
\frac{\partial}{\partial x^i}\bigg|_p\ =\ E_i|_p,
$$
hence $g_{ij}(p) = \text{diag}\,(-1,1,1,1)$ (i.e., they are ``normal" at $p$).  Because of \eqref{eqn:normal0}, it also follows that $\Gamma^i_{\,jk}(p) = 0$, hence also $\partial_i|_p(g_{jk}) = 0$; consult, e.g., \cite[Prop.~33, p.~73]{o1983}.

\np
Now we use normal coordinates at $p$ to define ``quasi-normal" coordinates at any other point $q \in \cc$, as follows.  Let $\{E_0|_p,\dots,E_3|_p\} \subset T_pM$ denote the orthonormal basis with respect to which the normal coordinates $(x^i)$ at $p$ are defined.  Given any other point $q \in \cc$, let $J_i$ denote the unique Jacobi field along the radial geodesic $\alpha_{pq}\colon [0,1] \lra \cc$ satisfying $J_i(0) = 0$ and $J_i'(0) = E_i|_p$, where ``$J_i'$" denotes the covariant derivative of $J_i$ along $\alpha_{pq}$.  Observe that $\{J_0(1),\dots,J_3(1)\} \subset T_qM$ will be linearly independent provided that $q$ is not conjugate to $p$ along $\alpha_{pq}$.  Indeed this must be so, since $\cc$ is contained in the normal coordinate chart centered at $p$, \emph{no} point in $\cc$ is conjugate to $p$ along a radial geodesic through $p$ (this is an important point to which we will return later; consult, e.g., \cite[Prop.~10, p.~271]{o1983}).  Thus $\{J_0(1),\dots,J_3(1)\} \subset T_qM$ is a basis, though it need not be orthonormal.  In any case, use it to define ``quasi-normal" coordinates $(\bar{x}^i)$ centered at $q \in \cc$, via the exponential map $\text{exp}_q$ at $q$, in the same manner as in \eqref{eqn:nc1} above.  Then, by construction,
$$
\frac{\partial}{\partial \bar{x}^i}\bigg|_q\ =\ J_i(1)
$$
for each $i = 0,\dots,3$.  In fact, each
\beqa
\label{eqn:JN}
\frac{\partial}{\partial \bar{x}^i}\bigg|_q\ =\ \frac{\partial}{\partial x^i}\bigg|_q,
\eeqa
as follows.  Set $q = \text{exp}_p(V)$ and let $\phi_{E_i|_p}|_V \in T_V(T_pM)$ denote the vector canonically identified with $V \in T_pM$ (explicitly, $\phi_{E_i|_p}|_V\colon C^{\infty}(T_pM) \lra \RR$ is defined by $f \mapsto \frac{d}{dt}(f(E_i|_p + tV))\big|_{t=0}$).  Then a standard result relating exponential maps and Jacobi fields (see, e.g., \cite[Prop.~1, p.~217]{o1983}) says that 
\beqa
\label{eqn:jacfield}
d_{V}(\text{exp}_p)(\phi_{E_i|_p}|_V)\ =\ J_i(1),
\eeqa
where $d_{V}(\text{exp}_p)\colon T_V(T_pM) \lra T_qM$ is the differential of $\text{exp}_p$ at $V$.  Finally, observe that if $q' = {\rm exp}_q(W)$ and $W = \xoverline{W}^iJ_i(1) \in T_qM$, then the radial geodesic $\alpha_{qq'}\colon [0,1] \lra \cc$ is given in the coordinates $(\bar{x}^i)$ by
$\alpha_{qq'}(t) = (t\xoverline{W}^0,\dots,t\xoverline{W}^3)$.  We will make use of these coordinates in Section \ref{sec:vv} below.

%%%%%%%%%%%%%%%%%%
\section{The world function and its gradient}
\label{sec:wf}
Though, in fact, we are already in a position to define the van Vleck determinant, in this section we first define \emph{Synge's world function} \cite{synge60} and state some of its properties.  We do this primarily to connect our work with previous presentations of the van Vleck determinant appearing in the literature, in particular \cite{Poisson}, from which this section derives.

\begin{definition}
\label{def:wf1}
With notation as above, define the {\rm world function}
\beqa
\label{def:wf}
\sigma\colon \cc \times \cc \lra \RR\hspace{.2in},\hspace{.2in}\sigma(p,q)\ :=\ \frac{1}{2}\int_0^1 \ip{\alpha_{pq}'}{\alpha_{pq}'}{g}\,dt.
\eeqa
\end{definition}
Note that when $q = p$, the radial geodesic $\alpha_{pp}$ is just the constant curve $\alpha_{pp}(t) = p$, so that $\alpha_{pp}'(t) = 0 \in T_pM$ and $\sigma(p,p) = 0$.  In fact, $\sigma$ is a smooth function  (see \cite[Lemma~9, p.~131]{o1983}) that is invariant under a change of parametrization, in the following sense: if we instead parametrize our radial geodesic $\alpha_{pq}$ to have domain $[a,b]$ and define
$$
\sigma(p,q)\ :=\ \frac{b-a}{2}\int_a^b \ip{\alpha_{pq}'}{\alpha_{pq}'}{g}\,dt,
$$
then \emph{this} function is invariant under any geodesic reparametrization of $\alpha_{pq}\colon[a,b] \lra \cc$.  Note that if $L(\alpha_{pq})$ denotes the length of $\alpha_{pq}\colon[0,1] \lra \cc$, then
$$
\sigma(p,q)\ =\
\left\{
\begin{array}{ll}
 L(\alpha_{pq})^2/2\hspace{.52in}\alpha_{pq}~\text{spacelike},\\
 0\hspace{1.103in}\alpha_{pq}~\text{null},\label{eqn:length}\\
 -L(\alpha_{pq})^2/2\hspace{.405in}\alpha_{pq}~\text{timelike}.\\
\end{array}
\right.
$$
Moving on, observe that \eqref{def:wf} is very similar to the {\it energy function} of any curve segment $\alpha\colon [0,1] \lra \cc$,
$$E_{\alpha}\colon[0,1] \lra \RR\hspace{.2in},\hspace{.2in}E_\alpha(t)\ :=\ \frac{1}{2}\int_0^t \ip{\alpha'(u)}{\alpha'(u)}{g}\,du
$$
(see \cite[pp.~288ff.]{o1983}).  Indeed, let $\vv{x}\colon[0,1] \times (-\delta,\delta) \lra \cc$ be a smooth variation of $\alpha_{pq}$ through geodesics starting at $p$; i.e., each curve $\vv{x}(\cdot,v)\colon [0,1] \lra \cc$ is a geodesic starting at $p$.  Denoting the tangent vectors of each $\vv{x}(\cdot,v)$ by $d\vv{x}(\partial/\partial t|_{(\cdot,v)}) := \vv{x}_t(\cdot,v)$, we define the energy $E_{\vv{x}}\colon (-\delta,\delta) \lra \RR$ of the variation $\vv{x}$ of $\alpha_{pq}$ to be the smooth function
$$
E_{\vv{x}}(v)\ :=\ \frac{1}{2}\int_0^1 \ip{\vv{x}_t(\cdot,v)}{\vv{x}_t(\cdot,v)}{g}\,dt.
$$
Then it is straightforward to verify that $\sigma(p,\cdot) \circ \vv{x}(1,\cdot) \colon (-\delta,\delta) \lra \RR$ satisfies
\beqa
\label{eqn:E}
\sigma(p,\cdot) \circ \vv{x}(1,\cdot)\ =\ E_{\vv{x}}.
\eeqa
If our parametrization domain were $[a,b]$, then \eqref{eqn:E} would be $\sigma(p,\cdot) \circ \vv{x}(1,\cdot) = (b-a)E_{\vv{x}}$.  This fact plays an important role in the proof of the following fundamental result regarding the world function $\sigma(p,\cdot)\colon \cc \lra \RR$ with initial point $p$ fixed. 

\begin{lemma}
\label{lemma:sigma1}
Fix $p \in \cc$ and let $\alpha_{pq}\colon[0,1]\lra \cc$ be the radial geodesic in $\cc$ from $p$ to an arbitrary point $q \in \cc$.  In normal coordinates $(x^i)$ centered at $p$, the gradient ${\rm grad}\,\sigma(p,\cdot) := K$ at $q$ is
\beqa
\label{eqn:energyvar}
K_q\ =\ \alpha_{pq}'(1)\ =\ x^i(q)\,\frac{\partial}{\partial x^i}\bigg|_q\cdot
\eeqa
Moreover, $K$ satisfies
\beqa
\label{eqn:fund1}
\ip{K}{K}{g}\ =\ 2\sigma(p,\cdot).
\eeqa
\end{lemma}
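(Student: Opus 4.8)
The plan is to compute the gradient through the first variation of energy, leveraging the identity \eqref{eqn:E} that identifies $\sigma(p,\cdot)$ along the endpoints of a geodesic variation with the energy of that variation. First I would fix $q = \alpha_{pq}(1)$, take an arbitrary tangent vector $W \in T_qM$, and realize it as $W = \vv{x}_v(1,0)$ for a suitable variation $\vv{x}(t,v)$ of $\alpha_{pq}$ through geodesics all starting at $p$. By \eqref{eqn:E}, the directional derivative of $\sigma(p,\cdot)$ at $q$ in the direction $W$ equals $\frac{d}{dv}\big|_{v=0} E_{\vv{x}}(v)$, so the whole computation reduces to a first-variation-of-energy formula.

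Next I would carry out that computation: differentiating under the integral, invoking metric compatibility together with the symmetry $\frac{D}{\partial v}\vv{x}_t = \frac{D}{\partial t}\vv{x}_v$, and integrating by parts in $t$. The geodesic equation $\frac{D}{\partial t}\vv{x}_t = 0$ annihilates the interior term, leaving only the boundary contribution $g(\vv{x}_v, \vv{x}_t)\big|_{t=0}^{t=1}$. Because every curve of the variation starts at the fixed point $p$, the variation field vanishes at $t=0$, so the boundary term at $0$ drops out and I am left with $g(\vv{x}_v(1,0),\alpha_{pq}'(1)) = g(W, \alpha_{pq}'(1))$. This establishes $d\sigma(p,\cdot)|_q(W) = g(\alpha_{pq}'(1), W)$ for every such variation field $W$.

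The step I expect to be the crux comes here: to upgrade this from an equality of inner products against special vectors to the identification $K_q = \alpha_{pq}'(1)$, I must know that \emph{every} $W \in T_qM$ arises as such a variation field. The variation field of a geodesic variation with fixed initial point $p$ is a Jacobi field $J$ along $\alpha_{pq}$ with $J(0)=0$, and the endpoint map $J \mapsto J(1)$ is a linear isomorphism onto $T_qM$ precisely because $q$ is not conjugate to $p$ along $\alpha_{pq}$ --- exactly the non-conjugacy of points in $\cc$ emphasized in Section \ref{sec:exp}. Granting this surjectivity, the identity holds for all $W$, and the defining property of the gradient yields $K_q = \alpha_{pq}'(1)$.

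The two remaining assertions are then short. For the coordinate expression I would invoke \eqref{eqn:normal0}: in normal coordinates the radial geodesic is $\alpha_{pq}(t) = (tx^0(q),\dots,tx^3(q))$, so differentiating in $t$ and setting $t=1$ gives $\alpha_{pq}'(1) = x^i(q)\,\partial/\partial x^i|_q$, which is \eqref{eqn:energyvar}. For \eqref{eqn:fund1} I would use that a geodesic has constant speed, so $g(\alpha_{pq}'(t),\alpha_{pq}'(t))$ is independent of $t$; then $\sigma(p,q) = \tfrac{1}{2}\int_0^1 g(\alpha_{pq}',\alpha_{pq}')\,dt$ equals half that constant while $g(K_q,K_q) = g(\alpha_{pq}'(1),\alpha_{pq}'(1))$ is the constant itself, giving $g(K,K) = 2\sigma(p,\cdot)$.
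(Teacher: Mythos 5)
Your proposal is correct and follows essentially the same route as the paper: identify $\sigma(p,\cdot)$ along the endpoints of a geodesic variation fixed at $p$ with the energy $E_{\vv{x}}$ via \eqref{eqn:E}, apply the first variation of energy to get $d\sigma(p,\cdot)|_q(W)=g(\alpha_{pq}'(1),W)$, and finish with \eqref{eqn:normal0} and constancy of geodesic speed. The only cosmetic difference is that the paper realizes an arbitrary $W_q$ explicitly as the variation field of $\vv{x}(t,v)=\big(t(x^i(q)+vW^i)\big)$ in normal coordinates (so surjectivity is automatic), whereas you invoke the non-conjugacy of points in $\cc$ to argue that every $W$ arises as $J(1)$ for a Jacobi field with $J(0)=0$; both are valid and rest on the same fact that $\exp_p$ is a diffeomorphism onto $\cc$.
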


\begin{proof}
Throughout this proof, we follow the notation and terminology in \cite[pp.~215ff.]{o1983}.  If $x(q) = (x^0(q),\dots,x^3(q))$ in normal coordinates $(x^i)$ centered at $p$, then
\beqa
\label{eqn:alphanorm}
\alpha_{pq}(t)\ =\ (t\,x^0(q),\dots,t\,x^3(q)).
\eeqa
Now let  $W_q = W^i\,\partial_i|_q \in T_qM$ be arbitrary and consider the variation $\vv{x}\colon [0,1] \times (-\delta,\delta) \lra \cc$ defined in the normal coordinates $(x^i)$ by
$$
\vv{x}(t,v)\ :=\ \big(t(x^0(q) + vW^0),\dots,t(x^3(q) + vW^3)\big).
$$
This is well-defined and contained in $\cc$ for $\delta$ small enough, with all longitudinal curves $\vv{x}(\cdot,v)\colon[0,1] \lra \cc$ being geodesics fixed at $p$, base curve $\vv{x}(\cdot,0)\colon [0,1] \lra \cc$ our original geodesic $\alpha_{pq}$ in \eqref{eqn:alphanorm}, and with variation field $V\colon [0,1] \lra TM$ given by
$$
d\vv{x}\bigg(\frac{\partial}{\partial v}\bigg|_{(t,0)}\bigg)\ :=\ V(t)\ =\ (tW^0,\dots,tW^3).
$$
Because the variation is fixed at $p$,
$$
V(0)\ =\ 0 \in T_pM,
$$
while at the opposite end
$$
V(1)\ =\ W_q \in T_qM.
$$
Note that $V$ is a Jacobi field because all longitudinal curves are geodesics.  To derive \eqref{eqn:energyvar}, we now consider the differential of the composition
$$
\sigma(p,\cdot) \circ \vv{x}(1,\cdot)\colon (-\delta,\delta) \lra \RR\hspace{.2in},\hspace{.2in}v\ \mapsto\ \sigma(p,\vv{x}(1,v)),
$$
and find that at $v = 0$,
\beqa
d_0(\sigma(p,\cdot) \circ \vv{x}(1,\cdot))\bigg(\frac{d}{dv}\bigg|_{0}\,\bigg) &=& d_{\vv{x}(1,0)}(\sigma(p,\cdot)) V(1)\nonumber\\
&=& \ip{K_q}{W_q}{g}.\nonumber
\eeqa
The significance of this can be seen once we realize that, thanks to \eqref{eqn:E},
$$
(\sigma(p,\cdot) \circ \vv{x}(1,\cdot))(v)\ =\ E_{\vv{x}}(v),
$$
and the variational properties of the latter are well known.  Indeed, for the variation $\vv{x}$,
$$
\frac{dE_{\vv{x}}}{dv}\bigg|_{0}\ =\ \ip{\alpha_{pq}'}{V}{g}\,\bigg|_0^1\ =\ \ip{\alpha_{pq}'(1)}{W_q}{g},
$$
(for the full form of $E_{\vv{x}}'(0)$, see \cite[Prop.~39, p. 289]{o1983}).  Noting that our choice of $W_q \in T_qM$ was arbitrary, it follows that we must have
$$
K_q\ =\ \alpha_{pq}'(1)\ =\ x^i(q)\,\frac{\partial}{\partial x^i}\bigg|_q\cdot
$$
We point out here that if $\alpha_{pq}\colon[a,b]\lra \cc$, then the right-hand side of \eqref{eqn:energyvar} would be scaled by a factor $(b-a)$.  Working in normal coordinates centered at $p$ so that we can rely on \eqref{eqn:energyvar}, we evaluate $\ip{K}{K}{g}$ at $q \in \cc$ and obtain
$$
\ip{K}{K}{g}(q)\ =\ \ip{\alpha_{pq}'(q)}{\alpha_{pq}'(q)}{g}\ =\ 2\,\sigma(p,q),
$$
where we note that the function $t \mapsto \ip{\alpha_{pq}'(t)}{\alpha_{pq}'(t)}{}$ is constant along the geodesic $\alpha_{pq}$.
\end{proof}

The following fact, whose proof we omit, now follows as a consequence.

\begin{corollary}
\label{cor:geodesics}
Fix $p \in \cc$ and let $K = {\rm grad}\,\sigma(p,\cdot)$.  Then $K_q \neq 0$ for all $q \in \cc-\{p\}$, while $K_p = 0$.  Furthermore, $K$ satisfies
$$
\nabla_KK\ =\ K,
$$
hence its integral curves admit parametrizations as geodesics.
\end{corollary}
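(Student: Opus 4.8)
The plan is to extract everything from Lemma~\ref{lemma:sigma1}, working in the normal coordinates $(x^i)$ centered at $p$, where \eqref{eqn:energyvar} gives $K_q = x^i(q)\,\partial/\partial x^i|_q$. The first assertion is then immediate: the $\partial/\partial x^i|_q$ form a basis of $T_qM$, so $K_q = 0$ holds exactly when $x^0(q) = \cdots = x^3(q) = 0$, that is, exactly when $q = p$; equivalently, $K_q = \alpha_{pq}'(1)$ is the terminal velocity of the radial geodesic $\alpha_{pq}$, which vanishes only for the constant curve $\alpha_{pp}$.

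For $\nabla_K K = K$ I would compute directly in these coordinates. With $K^i = x^i$, the defining formula for the connection gives
\beq
(\nabla_K K)^k\ =\ K^j\partial_j K^k + \Gamma^k_{ij}K^iK^j\ =\ x^k + \Gamma^k_{ij}\,x^i x^j.
\eeq
The second term vanishes throughout $\cc$: by \eqref{eqn:alphanorm} every radial geodesic is $\alpha_{pq}(t) = (t\,x^i(q))$, so that $\ddot{x}^k \equiv 0$ along it and its geodesic equation collapses to $\Gamma^k_{ij}(\alpha_{pq}(t))\,x^i(q)x^j(q) = 0$ for all $t$. Since every point of $\cc$ equals $\alpha_{pq}(1)$ for a suitable $q$, the contraction $\Gamma^k_{ij}\,x^i x^j$ vanishes identically on $\cc$, whence $\nabla_K K = K$. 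A coordinate-free route reaches the same conclusion: along $\alpha_{pq}$ one has $K_{\alpha_{pq}(t)} = t\,\alpha_{pq}'(t)$, so $\nabla_{\alpha_{pq}'}K = \frac{D}{dt}\big(t\,\alpha_{pq}'(t)\big) = \alpha_{pq}'(t)$ because $\alpha_{pq}$ is a geodesic, and therefore $\nabla_K K = \nabla_{t\alpha_{pq}'}K = t\,\alpha_{pq}' = K$.

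The statement on integral curves then follows formally. If $\beta$ is an integral curve of $K$, so that $\beta' = K\circ\beta$, then $\nabla_{\beta'}\beta' = \nabla_K K = \beta'$, which exhibits $\beta$ as a pregeodesic. Seeking a geodesic reparametrization $\tilde\beta(r) = \beta(s(r))$, one finds $\nabla_{\tilde\beta'}\tilde\beta' = \big(s'' + (s')^2\big)\beta'$, so it suffices to solve $s'' + (s')^2 = 0$; the choice $s(r) = \log r$ does the job and produces the desired geodesic parametrization. This matches the fact that the integral curves of $K = x^i\partial_i$ solve $\dot{x}^i = x^i$ and hence have coordinates $x^i(s) = e^{s}x^i(q)$ — precisely the radial geodesics reparametrized.

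I expect the one genuinely substantive point to be the global vanishing of $\Gamma^k_{ij}\,x^i x^j$ on all of $\cc$, not merely at $p$ (where it is automatic from $\Gamma^k_{ij}(p) = 0$); everything else is bookkeeping. The cleanest justification is the one above — that every point of $\cc$ lies on a radial geodesic emanating from $p$, along which the geodesic equation forces this contraction to vanish — and this is exactly the property of $\cc$ secured by the setup of Section~\ref{sec:exp} that makes the corollary true.
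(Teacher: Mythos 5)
Your proof is correct, and since the paper explicitly omits its own proof of this corollary there is nothing to match line by line; still, the route you take differs from the one the paper appears to intend. The displayed identity $\ip{K}{K}{g} = 2\sigma(p,\cdot)$ in Lemma~\ref{lemma:sigma1} suggests the intrinsic argument: since $K$ is a gradient its Hessian is symmetric, so for any $Y$ one has $\ip{\nabla_KK}{Y}{g} = \ip{\nabla_YK}{K}{g} = \tfrac{1}{2}\,Y\big(\ip{K}{K}{g}\big) = Y(\sigma) = \ip{K}{Y}{g}$, whence $\nabla_KK = K$ with no coordinates at all. You instead compute in normal coordinates and reduce everything to the identity $\Gamma^k_{ij}\,x^ix^j = 0$ on all of $\cc$, which you correctly isolate as the one substantive point and correctly justify: every $q\in\cc$ is $\alpha_{pq}(1)$ for the radial geodesic $\alpha_{pq}(t) = (t\,x^i(q))$, whose geodesic equation has $\ddot{x}^k \equiv 0$ and therefore forces the contraction to vanish at every point of $\cc$, not just at $p$. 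Your coordinate-free variant, $K_{\alpha_{pq}(t)} = t\,\alpha_{pq}'(t)$ together with $\nabla_{t\alpha_{pq}'}K = t\,\alpha_{pq}'$, is essentially the same computation repackaged and is also sound. The first assertion ($K_q = 0$ iff $q = p$) and the pregeodesic reparametrization ($s'' + (s')^2 = 0$ solved by $s(r) = \log r$) are both handled correctly, and the latter agrees with the explicit integral-curve computation $\gamma(t) = (e^t\,x^i(q))$ that the paper carries out at the end of Section~\ref{sec:wf}. What the Hessian-symmetry route buys is brevity and manifest coordinate independence; what your route buys is that it exposes the geometric content of normal coordinates (the radial contraction of the Christoffel symbols) that underlies the whole section. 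No gaps.
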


Before investigating the integral curves of the vector field $K$, we ask: In Lemma \ref{lemma:sigma1}, what if we had wanted to fix the endpoint $q$ instead, so that we would be working with $\sigma(\cdot,q)\colon \cc \lra \RR$?  Not surprisingly, the answer is the same, as we state here without proof.

\begin{corollary}
\label{cor:2}
Fix $q \in \cc$ and let $\alpha_{pq}\colon[0,1]\lra \cc$ be the radial geodesic in $\cc$ from an arbitrary point $p \in \cc$ to $q$.  In normal or quasi-normal coordinates $(\bar{y}^i)$ centered at $q$, the gradient ${\rm grad}\,\sigma(\cdot,q) := \xoverline{K}$ at $p$ is
\beqa
\label{eqn:energyvar2}
\xoverline{K}_p\ =\ -\alpha_{pq}'(0)\ =\ \bar{y}^i(p)\,\frac{\partial}{\partial \bar{y}^i}\bigg|_p\cdot
\eeqa
Moreover, $\xoverline{K}$ satisfies
\beqa
\label{eqn:fund2}
\ip{\xoverline{K}}{\xoverline{K}}{g}\ =\ 2\sigma(\cdot,q).
\eeqa
\end{corollary}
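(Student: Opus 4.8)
The plan is to derive Corollary \ref{cor:2} from Lemma \ref{lemma:sigma1} by exploiting the symmetry of the world function, with the sign bookkeeping and the passage to quasi-normal coordinates being the only points needing care.

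First I would record that $\sigma$ is symmetric, $\sigma(p,q) = \sigma(q,p)$. This follows by reversing the parametrization of the (unique) radial geodesic: writing $\alpha_{qp}(t) = \alpha_{pq}(1-t)$ gives $\alpha_{qp}'(t) = -\alpha_{pq}'(1-t)$, so $\ip{\alpha_{qp}'(t)}{\alpha_{qp}'(t)}{g} = \ip{\alpha_{pq}'(1-t)}{\alpha_{pq}'(1-t)}{g}$, and the substitution $u = 1-t$ in the defining integral \eqref{def:wf} returns $\sigma(q,p) = \sigma(p,q)$. Consequently $\sigma(\cdot,q)$ and $\sigma(q,\cdot)$ are the same function on $\cc$, so $\xoverline{K} = {\rm grad}\,\sigma(\cdot,q) = {\rm grad}\,\sigma(q,\cdot)$.

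Next I would apply Lemma \ref{lemma:sigma1} with the roles of $p$ and $q$ interchanged. Taking normal coordinates centered at $q$, the lemma identifies ${\rm grad}\,\sigma(q,\cdot)$ at $p$ with the tangent vector $\alpha_{qp}'(1)$, and the reversed parametrization gives $\alpha_{qp}'(1) = -\alpha_{pq}'(0)$, establishing the coordinate-free identity $\xoverline{K}_p = -\alpha_{pq}'(0)$. The coordinate form $\xoverline{K}_p = \bar{y}^i(p)\,\partial/\partial \bar{y}^i|_p$ is then immediate once I note that in these coordinates the radial geodesic from $q$ is linear, $\alpha_{qp}(t) = (t\bar{y}^0(p),\dots,t\bar{y}^3(p))$, so that differentiation in $t$ and evaluation at $t=1$ yield the claim.

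The one genuine subtlety, and the step I expect to require the most care, is the assertion that the formula holds in quasi-normal as well as normal coordinates. Here I would emphasize that the identity $\xoverline{K}_p = -\alpha_{pq}'(0)$ is coordinate-free — it drops out of the first-variation-of-energy computation exactly as in Lemma \ref{lemma:sigma1}, through \eqref{eqn:E} — and that its coordinate representation $\bar{y}^i(p)\,\partial/\partial \bar{y}^i|_p$ uses only that the coordinates are built from $\exp_q$, so that geodesics through $q$ are linear, and \emph{not} that the frame $\{J_0(1),\dots,J_3(1)\}$ is orthonormal. Thus orthonormality, which fails for quasi-normal coordinates, is never invoked, and the formula is valid in both cases. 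Finally, the norm relation \eqref{eqn:fund2} follows from symmetry together with the constancy of $\ip{\alpha_{qp}'(t)}{\alpha_{qp}'(t)}{g}$ along the geodesic: $\ip{\xoverline{K}}{\xoverline{K}}{g}(p) = \ip{\alpha_{qp}'(1)}{\alpha_{qp}'(1)}{g} = \ip{\alpha_{qp}'(0)}{\alpha_{qp}'(0)}{g} = 2\sigma(q,p) = 2\sigma(\cdot,q)(p)$, mirroring \eqref{eqn:fund1}. As an alternative I could bypass symmetry entirely and repeat the variational argument of Lemma \ref{lemma:sigma1} using a variation through geodesics all ending at the fixed point $q$, so that the boundary term in the first variation of energy appears at $t=0$ rather than $t=1$ and produces $-\alpha_{pq}'(0)$ directly.
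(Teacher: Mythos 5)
Your argument is correct. Note that the paper itself states Corollary \ref{cor:2} \emph{without} proof; the only hint it gives of the intended argument is the remark preceding the statement and the later computation in Section \ref{sec:magdef}, where the derivative of $\sigma(\gamma(\tau),q)$ is obtained by ``a slight alteration of the proofs of Lemma \ref{lemma:sigma1} and Corollary \ref{cor:2}'' --- namely, rerunning the first variation of energy with a variation through geodesics fixed at the endpoint $q$, so that the boundary term appears at $t=0$ with a minus sign. That is precisely the ``alternative'' you sketch in your last sentence, so your fallback route is the one the authors evidently have in mind. Your primary route is genuinely different and arguably cleaner: you first establish the symmetry $\sigma(p,q)=\sigma(q,p)$ by reversing the parametrization of the unique radial geodesic (legitimate here because $\cc$ is a convex normal neighborhood of each of its points, so Lemma \ref{lemma:sigma1} applies with base point $q$), and then simply quote Lemma \ref{lemma:sigma1} with the roles of $p$ and $q$ interchanged, reading off $\xoverline{K}_p=\alpha_{qp}'(1)=-\alpha_{pq}'(0)$ and $\ip{\xoverline{K}}{\xoverline{K}}{g}=2\sigma$ from the constancy of $\ip{\alpha_{qp}'}{\alpha_{qp}'}{g}$ along the geodesic. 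This buys you the corollary with no new variational computation, at the small cost of having to justify symmetry --- which your substitution $u=1-t$ does. You also correctly isolate the one point needing care, the validity of the coordinate formula in quasi-normal coordinates: since the identity $\xoverline{K}_p=-\alpha_{pq}'(0)$ is coordinate-free and the representation $\bar{y}^i(p)\,\partial/\partial\bar{y}^i|_p$ uses only that coordinates built from $\exp_q$ linearize radial geodesics through $q$ (not orthonormality of the frame $\{J_i(1)\}$), the formula holds in both coordinate systems. No gaps.
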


We close this section by describing the integral curves of $K = \text{grad}\,\sigma(p,\cdot)$.  Because $K_p = 0$ by Corollary \ref{cor:geodesics}, the integral curve of $K$ through $p$ is simply the constant curve at $p$.  Now pick any other point $q \in \cc$ and let $\gamma\colon I \lra \cc$ denote the maximal integral curve of $K$ starting at $q$, where $I \subset \RR$ is a connected open interval containing 0.  In terms of normal coordinates $(x^i)$ centered at $p$, let $\gamma(t) = (\gamma^0(t),\dots,\gamma^3(t))$ and $x(q) = (x^0(q),\dots,x^3(q))$.  Then
$$
\gamma'(t)\ =\ \dot{\gamma}^i(t) \frac{\partial}{\partial x^i}\bigg|_{\gamma(t)}\ =\ K_{\gamma(t)}\ =\ \gamma^i(t)\frac{\partial}{\partial x^i}\bigg|_{\gamma(t)},
$$
where the last equality arises via \eqref{eqn:energyvar} in Lemma \ref{lemma:sigma1}.  Since $\gamma(0) = (x^1(q),\dots,x^n(q))$, it follows that in normal coordinates $(x^i)$ centered at $p$, the integral curve of $K$ starting at $q$ is
\beqa
\label{eqn:IC}
\gamma(t)\ =\ (e^t\,x^0(q),\dots,e^t\,x^3(q)).\nonumber
\eeqa
Hence the maximal interval $I$ necessarily contains $(-\infty,0]$ and
\beqa
\label{eqn:limit}
\lim_{t \to -\infty} \gamma(t)\ =\ (0,\dots,0)\ =\ p,\nonumber
\eeqa
though $p \notin \gamma(I)$.  Of course, $\nabla_{\gamma'}\gamma' = \gamma'$ by Corollary \ref{cor:geodesics}, hence $\gamma(t)$ admits a reparametrization as a geodesic.  It is straightforward to verify that one such reparametrization is given by the diffeomorphism $h\colon I \lra h(I) \subset (0,+\infty)$ defined by $h(t) = e^t := s$, in terms of which the reparametrized curve $\tilde{\gamma}(s) := (\gamma \circ h^{-1})(s)$ is given, once again in normal coordinates $(x^i)$ centered at $p$, by
\beqa
\label{eqn:geod}
\tilde{\gamma}(s)\ =\ (s\,x^0(q),\dots,s\,x^3(q))\hspace{.2in},\hspace{.2in}\tilde{\gamma}'(s)\ =\  \frac{K_{\gamma(\text{ln}\,s)}}{s}\cdot\nonumber
\eeqa
Observe that $\tilde{\gamma}$ is certainly defined on $(0,1] = h((-\infty,0])$, on which it must coincide with the radial geodesic $\alpha_{pq}\colon(0,1] \lra \cc$, and that $\tilde{\gamma}(1) = q$.

%%%%%%%%%%%%%%%%%%
\section{The van Vleck determinant}
\label{sec:vv}
The quantity known as the \emph{van Vleck determinant} \cite{van28} arises via a simple application of the Jacobian function of a smooth mapping (see \cite[p. 196]{o1983}).  Recall that, given smooth oriented manifolds $M$ and $N$ with corresponding volume forms $\omega_M$ and $\omega_N$, and a smooth map $\phi\colon M \lra N$, the \emph{Jacobian function of $\phi$} is the smooth function $\mathcal{J}(\phi) \in C^{\infty}(M)$ such that
\beqa
\label{eqn:JN0}
\phi^*(\omega_N)\ =\ \mathcal{J}(\phi)\,\omega_M,
\eeqa
where $\phi^*$ is the pullback of $\phi$.  Now let $\phi$ be the diffeomorphism $\exp_p\colon \text{exp}_p^{-1}(\cc) \subset T_pM \lra \cc$.  The volume element on $\cc$, in terms of normal coordinates $(x^i)$ centered at $p$, is $\sqrt{-\text{det}\,[g_{ij}]}\,dx^0 \wedge \cdots \wedge dx^3$, while the volume element on $\text{exp}_p^{-1}(\cc) \subset T_pM$, which we identify via the diffeomorphism $E$ in \eqref{eqn:nc1} with an open subset in $\RR_1^4$, is simply $dx^0 \wedge \cdots \wedge dx^3$.  Then \eqref{eqn:JN0} becomes
$$
\text{exp}_p^*\left(\sqrt{-\text{det}\,[g_{ij}]}\ dx^0 \wedge \cdots \wedge dx^3\right)\ =\ \mathcal{J}(\text{exp}_p)\ dx^0 \wedge \cdots \wedge dx^3.
$$
For any $q \in \cc$ with corresponding coordinates $x(q)$, observe that $d\,\text{exp}_p(\partial/\partial x^j|_{x(q)}) = \partial/\partial x^i|_q$, hence 
\beqa
\label{eqn:vvJac}
\mathcal{J}(\text{exp}_p)(q)\ =\ \sqrt{-\text{det}\,[g_{ij}(q)]},
\eeqa
where, by identifying $q = \text{exp}_p(V)\in \cc$ with its unique vector $V \in T_pM$, we can regard $\mathcal{J}(\text{exp}_p)$ as a smooth function on $\cc$, rather than on $T_pM$.  Note that, using the quasi-normal coordinates of Section \ref{sec:exp}, and \eqref{eqn:JN} in particular, the right-hand side of \eqref{eqn:vvJac} is also given by
\beqa
\label{eqn:vvJ}
\sqrt{-\text{det}\,[\ip{J_i(1)}{J_j(1)}{}]}.
\eeqa

Indeed, recalling that $\gamma_V(t) = \gamma_{tV}(1)$ whenever either side is defined, it is straightforward to show that for all $t > 0$ at which $\gamma_V$ is defined, the differential $d_{tV}(\text{exp}_p)\colon T_{tV}(T_pM) \lra T_{\gamma_V(t)}M$ satisfies
$$
\phi_{E_i|_p}|_{tV}\ \mapsto\ d_{tV}(\text{exp}_p)(\phi_{E_i|_p}|_{tV})\ =\ \frac{J_i(t)}{t},
$$
whose derivation is similar to that in \cite[Prop.~1, p.~217]{o1983}.  Then the Jacobian function of $\text{exp}_p$ along $\gamma_V$ satisfies
$$
\mathcal{J}(\text{exp}_p)(\gamma_V(t))\ =\ \frac{\sqrt{-\text{det}\,[\ip{J_i(t)}{J_j(t)}{}]}}{t^4}\cdot
$$  

\begin{definition}[{\rm van Vleck determinant in a convex normal neighborhood}]
\label{def:vleck2}
Let $\cc$ be a convex normal neighborhood of a Lorentzian manifold $(M,g)$.  The {\rm van Vleck determinant} $\Delta\colon \cc \times \cc\lra \RR$ is the function defined by
\beqa
\label{eqn:vleck2}
\Delta(p,q)\ :=\ \frac{1}{\mathcal{J}({\rm exp}_p)(q)}\cdot
\eeqa
\end{definition}

Let us make five remarks about this definition.
(1) Recall that, because we have restricted ourselves to a convex normal neighborhood $\cc$, any two points $p,q \in \cc$ are connected by a \emph{unique} geodesic segment in $\cc$, so there is no ambiguity as to the choice of geodesic.
(2) We reemphasize that using the convex normal neighborhood $\cc$ allows us to write $\mathcal{J}(\text{exp}_p)$, which is defined on $T_pM$, as a smooth function on $\cc$ itself.
(3) Bearing \eqref{eqn:vvJac} in mind, note that, although each orthonormal basis for $T_pM$ determines a normal coordinate system centered at $p$, $\Delta(p,q)$ can be evaluated with respect to any such coordinates: orthonormal bases are related by an orthogonal change of basis matrix, hence $\Delta(p,q)$ remains unchanged.
(4) Because $p = \text{exp}_p(0)$ and $d_0\text{exp}_p\colon T_0(T_pM) \lra T_pM$ sends $\phi_X|_0 \mapsto X$, the van Vleck determinant satisfies
\beqa
\label{eqn:pp}
\Delta(p,p)\ =\ 1.
\eeqa
(5) In Minkowski spacetime $\RR_1^4$, which is itself a convex normal neighborhood of each of its points, and whose van Vleck determinant we denote henceforth by ``$\Delta_0$," it is easy to verify that
\beqa
\label{eqn:mink1}
\Delta_0(p,q)\ =\ 1\hspace{.2in}\forall p,q \in \RR_1^4.
\eeqa
In fact, \eqref{eqn:vleck2} is not the usual way in which the van Vleck determinant is presented.  Rather, it is defined via the world function $\sigma$.  In \cite{Poisson,Visser}, for example, the van Vleck determinant is defined via the mixed partial derivatives of $\sigma$:
$$
\frac{\partial^2 \sigma}{\partial \bar{x}^k \partial x^l}\bigg|_{(p,q)}\ :=\ \sigma_{\bar{k}l}(p,q).
$$
Specifically, it is defined to be the following quantity:
\beqa
\label{eqn:vleck0}
\Delta(p,q)\ :=\ -\frac{\text{det}\,[-\sigma_{\bar{k}l}(p,q)]}{\sqrt{-\text{det}\,[g_{ij}(p)]}\,\sqrt{-\text{det}\,[g_{\,\overline{i}\overline{j}}(q)]}}\cdot
\eeqa
As is easily verified, this quantity is independent of the coordinates $(x^i)$ at $p$ and $(\bar{x}^i)$ at $q$ used to compute it.  In which case, let us evaluate \eqref{eqn:vleck0} using normal coordinates $(x^i)$ centered at $p$ and corresponding quasi-normal coordinates $(\bar{x}^i)$ centered at $q$; doing so, and making use of \eqref{eqn:JN}, we will show that one obtains simply
\beqa
\label{eqn:master00}
\frac{1}{\sqrt{-\text{det}\,[g_{ij}(q)]}},
\eeqa
which is equal to $1/\mathcal{J}(\text{exp}_p)(q)$ in the same coordinates.  To derive \eqref{eqn:master00}, return to the quantity
$$
\frac{\partial^2 \sigma}{\partial \bar{x}^k \partial x^l}\bigg|_{(p,q)}
$$
and note that, by Corollary \ref{cor:2}, in quasi-normal coordinates $(\bar{x}^i)$ centered at $q$, $\xoverline{K} = \text{grad}\,\sigma(\cdot,q)$ satisfies
\beqa
\label{eqn:qfixed}
\xoverline{K}_p\ =\ \underbrace{g^{\,\overline{i}\overline{j}}(p)\,\frac{\partial \sigma(\cdot,q)}{\partial \bar{x}^i}\bigg|_p}_{\xoverline{K}_p^j}\,\frac{\partial}{\partial \bar{x}^j}\bigg|_p\ =\ \bar{x}^k(p)\,\frac{\partial}{\partial \bar{x}^k}\bigg|_p\ =\ -\alpha_{pq}'(0).
\eeqa
Now introduce normal coordinates $(x^i)$ centered at $p$, so that we can express $-\alpha_{pq}'(0)$ in two ways:
$$
\underbrace{\bar{x}^k(p)\,\frac{\partial}{\partial \bar{x}^k}\bigg|_p\,}_{q~\text{coordinates}}\ =\ -\alpha_{pq}'(0)\ =\ \underbrace{-x^k(q)\,\frac{\partial}{\partial x^k}\bigg|_p\,}_{p~\text{coordinates}}\ =\ -x^k(q)\,\frac{\partial \bar{x}^j}{\partial x^k}\bigg|_p\frac{\partial}{\partial \bar{x}^j}\bigg|_p\cdot
$$
Together with \eqref{eqn:qfixed}, this yields
\beqa
\label{eqn:almost}
\frac{\partial \sigma(\cdot,q)}{\partial \bar{x}^i}\bigg|_p\ =\ -g_{\,\overline{i}\overline{j}}(p)\,x^k(q)\,\frac{\partial \bar{x}^j}{\partial x^k}\bigg|_p\cdot
\eeqa
With $q$ fixed, $\sigma(\cdot,q)$ is a function of $p \in \cc$; since both coordinate charts $(x^i)$ and $(\bar{x})^i$ cover $\cc$, let us ``switch charts" and write the left-hand side as
$$
\frac{\partial \sigma(\cdot,q)}{\partial \bar{x}^i}\bigg|_p\ =\ \frac{\partial x^l}{\partial \bar{x}^i}\bigg|_p\frac{\partial \sigma(\cdot,q)}{\partial x^l}\bigg|_p,
$$
so that \eqref{eqn:almost} can be rewritten as
\beqa
\label{eqn:almost2}
\frac{\partial \sigma(\cdot,q)}{\partial x^l}\bigg|_p\ =\ -x^k(q)\underbrace{\,g_{\,\overline{i}\overline{j}}(p)\,\frac{\partial \bar{x}^j}{\partial x^k}\bigg|_p\frac{\partial \bar{x}^i}{\partial x^l}\bigg|_p\,}_{g_{lk}(p)}\ =\ -g_{lk}(p)\, x^k(q).
\eeqa
Now we must take the partial derivative of this with respect to $\partial/\partial \bar{x}^k$, \emph{evaluated at $q$.}  But as we saw in \eqref{eqn:JN}, quasi-normal coordinates satisfy
$$
\frac{\partial}{\partial \bar{x}^i}\bigg|_q\ =\ \frac{\partial}{\partial x^i}\bigg|_q,
$$
in which case the application of $\partial/\partial \bar{x}^k|_q$ to \eqref{eqn:almost2} yields simply
\beqa
\label{eqn:bingo1}
\frac{\partial}{\partial \bar{x}^i}\bigg|_q\left(\frac{\partial \sigma}{\partial x^l}\right)\ =\ \frac{\partial}{\partial x^i}\bigg|_q\left(\frac{\partial \sigma}{\partial x^l}\right)\ =\ -g_{lk}(p),
\eeqa
so that, in the end,
\beqa
\label{eqn:bingo}
\frac{\partial^2 \sigma}{\partial \bar{x}^k \partial x^l}\bigg|_{(p,q)}\ =\ -g_{lk}(p).
\eeqa
With this convenient formulation of the mixed partial derivatives $\sigma_{\bar{k}l}$ of $\sigma$, consider \eqref{eqn:vleck0} once again.  Evaluating it using normal coordinates $(x^i)$ centered at $p$ and corresponding quasi-normal coordinates $(\bar{x}^i)$ centered at $q$, we obtain, with the help of \eqref{eqn:bingo} and \eqref{eqn:JN}, simply
$$
\frac{1}{\sqrt{-\text{det}\,[g_{ij}(q)]}},
$$
which is precisely \eqref{eqn:master00}.  Nevertheless, the reason we have emphasized the Jacobian function of the exponential map in Definition \ref{def:vleck2} is because the exponential map gives a precise indication of when $q$ is \emph{conjugate} to $p$:

\begin{lemma}
\label{cor:conjugate}
For any $p \in M$ and $V \in T_pM$, let $\gamma_V$ denote the geodesic starting at $p$ in the direction $V$.  Then the van Vleck determinant $\Delta(p,\cdot)$ is unbounded at a point $q$ along $\gamma_V$ if and only if $q$ is conjugate to $p$ along $\gamma_V$.
\end{lemma}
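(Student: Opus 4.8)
The plan is to work directly from the Jacobian formula for $\text{exp}_p$ along $\gamma_V$ derived just above the statement, namely
\[
\mathcal{J}(\text{exp}_p)(\gamma_V(t))\ =\ \frac{\sqrt{-\text{det}\,[\ip{J_i(t)}{J_j(t)}{g}]}}{t^4},
\]
where $J_0,\dots,J_3$ are the Jacobi fields along $\gamma_V$ with $J_i(0) = 0$ and $J_i'(0) = E_i|_p$. Combined with Definition \ref{def:vleck2}, this gives
\[
\Delta(p,\gamma_V(t))\ =\ \frac{t^4}{\sqrt{-\text{det}\,[\ip{J_i(t)}{J_j(t)}{g}]}}.
\]
Since the $J_i$ are smooth and the numerator $t^4$ is smooth and nonzero for $t \neq 0$, the function $t \mapsto \Delta(p,\gamma_V(t))$ is finite and smooth wherever the Gram determinant $\text{det}\,[\ip{J_i(t)}{J_j(t)}{g}]$ is nonzero, and it is unbounded at a point $q = \gamma_V(t_0)$ (with $t_0 \neq 0$, so that $q \neq p$) precisely when this Gram determinant vanishes at $t_0$. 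Thus the lemma reduces to showing that the Gram determinant of $\{J_0(t_0),\dots,J_3(t_0)\}$ vanishes if and only if $q$ is conjugate to $p$ along $\gamma_V$.

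First I would record the linear-algebra fact that, for four vectors in the four-dimensional Lorentzian space $T_qM$, the Gram determinant $\text{det}\,[\ip{J_i(t_0)}{J_j(t_0)}{g}]$ vanishes if and only if $\{J_0(t_0),\dots,J_3(t_0)\}$ is linearly dependent. Indeed, if these vectors are linearly independent they form a basis of the non-degenerate space $T_qM$, so their Gram matrix is congruent to the metric matrix and has nonzero determinant; and if they are dependent the Gram matrix has rank at most three and its determinant vanishes. This step needs a word of care, because in an indefinite inner product a Gram determinant can vanish for linearly independent vectors spanning a degenerate subspace; the point is that here we have a full complement of $n = 4$ vectors, so independence is equivalent to spanning the whole (non-degenerate) tangent space, and the degenerate-subspace pathology cannot arise.

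It then remains to identify linear dependence of $\{J_i(t_0)\}$ with conjugacy. For the forward direction, suppose $\sum_i c^i J_i(t_0) = 0$ with the $c^i$ not all zero, and set $J := \sum_i c^i J_i$, again a Jacobi field along $\gamma_V$. Then $J(0) = \sum_i c^i J_i(0) = 0$ and $J(t_0) = 0$, while $J'(0) = \sum_i c^i E_i|_p \neq 0$ because the $c^i$ are not all zero and $\{E_i|_p\}$ is a basis; hence $J$ is a nontrivial Jacobi field vanishing at both $p$ and $q$, which is exactly the statement that $q$ is conjugate to $p$ along $\gamma_V$. For the converse, given a nontrivial Jacobi field $J$ with $J(0) = 0$ and $J(t_0) = 0$, I would write $J'(0) = c^i E_i|_p$ with not all $c^i$ zero (else $J \equiv 0$, by uniqueness of Jacobi fields with prescribed $J(0)$ and $J'(0)$); by that same uniqueness $J = \sum_i c^i J_i$, and evaluating at $t_0$ gives $\sum_i c^i J_i(t_0) = 0$, so the $\{J_i(t_0)\}$ are dependent.

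I expect the main obstacle to be the honest justification of the first, linear-algebra step in the indefinite setting, together with pinning down precisely what ``unbounded at $q$'' means: one must check that the blow-up of $\Delta(p,\gamma_V(t))$ is controlled entirely by the vanishing of the smooth Gram-determinant function (the factor $t^4$ being harmless away from the base point $p$), so that unboundedness at $q$ is genuinely equivalent to a zero of that determinant rather than to some competing effect. The conjugacy translation in the last paragraph is then the standard characterization of conjugate points and amounts to bookkeeping once uniqueness of Jacobi fields with prescribed initial data is invoked.
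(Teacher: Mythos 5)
Your proposal is correct and follows essentially the same route as the paper: both reduce the statement to the chain of equivalences between vanishing of the Gram determinant $\det\,[g(J_i(1),J_j(1))]$, linear dependence of the $J_i(1)$ in the non-degenerate space $T_qM$, and conjugacy of $q$ to $p$. The only substantive difference is in the direction ``conjugate $\Rightarrow$ unbounded,'' where the paper rescales $J'(0)$ to unit length and chooses an adapted orthonormal basis, whereas you expand $J$ in the $J_i$ via uniqueness of Jacobi fields with prescribed initial data --- a minor variant that, if anything, sidesteps the normalization step when $J'(0)$ fails to be unit-normalizable.
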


\begin{proof}
Let us suppose that $q = \text{exp}_p(V) = \gamma_V(1)$.  If $q$ is conjugate to $p$ along $\gamma_V$, then by definition there exists a nontrivial Jacobi field $J$ along $\gamma_V\colon[0,1] \lra M$ satisfying $J(0) = J(1) = 0$ (see \cite[Prop.~10, p. 271]{o1983}).  Set $J'(0) = W$ (note that $W \in T_pM$ must be nonzero, since $J$ is nontrivial).   Then $d_{V}(\text{exp}_p)(\phi_{W}|_{V}) = J(1) = 0$.  Scaling $J$ so that $W$ has unit length, it follows from \eqref{eqn:vvJ} that with respect to any orthonormal basis for $T_pM$ containing $W$, $\mathcal{J}(\text{exp}_p)(q) = 0$, hence $\Delta(p,\cdot)$ is unbounded at $q$.  Conversely, suppose that $\Delta(p,\cdot)$ is unbounded at $q$; if the $J_i(1)$'s were linearly independent, hence a basis for $T_qM$, then the nondegeneracy of $g_q \colon T_qM \times T_qM \lra \RR$ would ensure that $\text{det}\,[\ip{J_i(1)}{J_j(1)}{}] \neq 0$.  Since this is not the case, it follows that the $J_i(1)$'s must be linearly dependent; hence there exists a nontrivial linear combination $\sum_{i=1}^n a_i\,J_i$ such that $\sum_{i=1}^n a_i\,J_i(1) = 0$ (each $a_i \in \RR$).  As any linear combination of Jacobi fields is again a Jacobi field, it follows that $\widetilde{J} := \sum_{i=1}^n a_i\,J_i$ is a nontrivial Jacobi field satisfying $\widetilde{J}(0) = \widetilde{J}(1) = 0$ along $\gamma_V\colon[0,1]\lra M$.  Hence $p$ and $q$ are conjugate.
\end{proof}

We add immediately that by the Morse Index Theorem for null geodesics (see \cite[Theorem~10.77, p.~398]{beem}), \emph{a null geodesic has only finitely many conjugate points.}  Also, expanding the components of the metric $g$ in normal coordinates $(x^i)$ centered at $p$, for points $q = \gamma_V(1) \in \cc$ we have
$$
-\text{det}\,[g_{ij}(q)]\ =\ 1 - \frac{1}{3}\,\text{Ric}\,(V,V)\, +\, \mathcal{O}(|V|^3),
$$
hence the lowest order expansion of the van Vleck determinant is
$$
\Delta(p,\gamma_V(1))\ =\ \frac{1}{\mathcal{J}({\rm exp}_p)(\gamma_V(1))}\ \approx\ 1 + \frac{1}{6}\,\text{Ric}(V,V),
$$
cf. \cite[Eqn. (64)]{Visser} and \cite[Eqn. (7.3)]{Poisson}.

%%%%%%%%%%%%%%%%%%
\section{Magnification in gravitational lensing}
\label{sec:magdef}
Now we come to gravitational lensing in spacetimes.  What we need to define is a magnification function of two points along a null geodesic in a spacetime, the latter modeling a light ray in whose magnification we are interested.  As this null geodesic is emitted by a light-emitting source, we must therefore view our initial point $p$ as lying on the worldline of that source, that is, on a smooth, future-pointing timelike curve $\gamma(\tau)$, which, for convenience, we take to start at $\gamma(0) = p$ in $\cc$.  Let $q \in \cc$ and consider the radial geodesic $\alpha_{pq}\colon[0,1] \lra \cc$.  First, we show that the quantity 
$$
\frac{d\sigma(\gamma(\tau),q)}{d\tau}\bigg|_{\tau=0} 
$$
is nonzero.  We will show this via a slight alteration of the proofs of Lemma \ref{lemma:sigma1} and Corollary \ref{cor:2}, as follows.  In normal coordinates $(y^i)$ centered at $q$, let $\bar{\gamma}^i(\tau)$ denote the components of $\gamma(\tau)$, so that, in particular, $y^i(p) = \bar{\gamma}^i(0)$, and define a variation $\tilde{\xx}\colon [0,1] \times (-\delta,\delta) \lra \cc$ of $\alpha_{pq}$ by
\beqa
\label{eqn:varcoord}
\tilde{\xx}(t,\tau)\ :=\ \big((1-t)\bar{\gamma}^0(\tau),\dots,(1-t)\bar{\gamma}^3(\tau)\big),
\eeqa
which is well-defined for $\delta$ small enough.  Observe that $\tilde{\xx}$ is a variation of $\tilde{\xx}(t,0) = \alpha_{pq}(t)$ through geodesics, all of which are fixed at the endpoint $q$. The crucial new input here is that the initial transverse curve now coincides with the worldline of our light-emitting source:
$$
\tilde{\xx}(0,\tau)\ =\ (\bar{\gamma}^0(\tau),\dots,\bar{\gamma}^3(\tau))\ =\ \gamma(\tau).
$$
Now we proceed as we did in Lemma \ref{lemma:sigma1} and Corollary \ref{cor:2} and take the derivative of the composition 
$$
\sigma(\cdot,q) \circ \tilde{\xx}(0,\cdot)\colon (-\delta,\delta) \lra \RR\hspace{.2in},\hspace{.2in}\tau\ \mapsto\ \sigma(\tilde{\xx}(0,\tau),q)\ =\ \sigma(\gamma(\tau),q), 
$$
at $\tau = 0$.  Doing so is made easy once again because
\beqa
\label{eqn:p1}
(\sigma(\cdot,q) \circ \gamma)(\tau)\ =\ E_{\tilde{\xx}}(\tau),
\eeqa
with the energy of $\tilde{\xx}$ satisfying
$$
\frac{dE_{\tilde{\xx}}}{d\tau}\bigg|_{0}\ =\ \ip{\alpha_{pq}'}{\widetilde{V}}{g}\,\bigg|_0^1\ =\ -\ip{\alpha_{pq}'(0)}{\bar{\gamma}'(0)}{g}\ >\ 0,
$$
where $\widetilde{V}(t)$ denotes the variation field $\tilde{\xx}_{\tau}(t,0)$, which is a Jacobi field satisfying $\widetilde{V}(0) = \gamma'(0)$ and $\widetilde{V}(1) = 0$, and where $\ip{\alpha_{pq}'(0)}{\gamma'(0)}{g} < 0$ because $\alpha_{pq}(t)$ is a future-pointing null geodesic and $\gamma(\tau)$ is a future-pointing timelike curve.  In other words, the function $\tau \mapsto \sigma(\gamma(\tau),q)$ necessarily satisfies
\beqa
\label{eqn:pos}
\frac{d\sigma(\gamma(\tau),q)}{d\tau}\bigg|_{\tau=0}\ =\ -\ip{\alpha_{pq}'(0)}{\bar{\gamma}'(0)}{g}\ >\ 0, 
\eeqa
which is nonzero, as desired.  Observe that \eqref{eqn:pos} is, in fact, parametrization invariant; i.e., if we had worked with a different parametrization of $\alpha_{pq}$, say one of the form $\tilde{\alpha}_{pq}\colon [a,b] \lra \cc$ with $\tilde{\alpha}_{pq}(a) = p$ and $\tilde{\alpha}_{pq}(b) = q$, so that $\tilde{\alpha}_{pq}'(a) = (b-a)^{-1}\alpha_{pq}'(0)$, then in fact \eqref{eqn:p1} would be replaced by $(\sigma(\cdot,q) \circ \gamma)(\tau) = (b-a)E_{\tilde{\xx}}(\tau)$, so that the right-hand side of \eqref{eqn:pos} would remain unchanged.  With this established, we are now ready to give a spacetime definition of magnification in gravitational lensing.  To that end, recall that in Euclidean space, an isotropically emitting light source of luminosity $L$ has an observed flux $F$ at radial distance $r$ given by
$$
F\ =\ \frac{L}{4\pi r^2}\cdot
$$
Now, while both $L$ and $F$ are local properties that are meaningful in an arbitrary spacetime, the distance $r$ is not. However, one can use $L$ and $F$ to construct a distance measure that \emph{is} meaningful in spacetime, namely, the luminosity distance $D$:
\beqa
D(p,q)\ =\ \sqrt{\frac{L(p)}{4\pi F(q)}}\cdot
\label{luminosity}
\eeqa
Thus, a naive generalization of the lensing magnification to spacetime follows from (\ref{eqn:mag}) and (\ref{luminosity}),
\beqa
\mu(p,q)\ =\ \frac{D^2_0(p,q)}{D^2(p,q)},
\label{magnification2}
\eeqa
where the subscript $0$ denotes ``the absence of the lens"\,---\,of which we will have more to say below.  Now, Etherington \cite{etherington33} has shown that the luminosity distance of a light source in an arbitrary spacetime can be written as
\beqa
D(p,q)\ =\ -\frac{d\sigma(\gamma(\tau),q)}{d \tau}\bigg|_{0}\left(\frac{\text{det}\,[g_{ij}(q)]}{\text{det}\,[g_{ij}(p)]}\right)^{\frac{1}{4}}
\label{etherington}
\eeqa
\emph{provided that this is evaluated in normal coordinates $(x^i)$ centered at $p$,} whose trajectory in spacetime is given, as above, by a future pointing timelike curve $\gamma(\tau)$ satisfying $\gamma(0) = p$.  Indeed, in these coordinates $|\text{det}\,[g_{ij}(p)]| = 1$.  Therefore, if we combine \eqref{eqn:vvJac}, \eqref{eqn:vleck2}, \eqref{magnification2}, and \eqref{etherington}, we arrive at last at a definition for the lensing magnification in an arbitrary spacetime in terms of the van Vleck determinant:

\begin{definition}[Unsigned magnification]
\label{def:mag}
Let $(M,g)$ be a spacetime modeling a gravitational lens.  Let $p,q \in M$ lie in a convex normal neighborhood of $M$.  Then the {\rm unsigned magnification of $p$ at $q$} is defined to be
\beqa
\mu(p,q)\ :=\ \left(\frac{\frac{d\sigma_0}{d\tau}}{\frac{d\sigma}{d\tau}}\Bigg|_{0}\right)^{\!\!2}\,\frac{\Delta(p,q)}{\Delta_0(p,q)},
\label{result}
\eeqa
where the symbol ``0" denotes the spacetime $M$ in the absence of the lens.
\end{definition}

Observe that, because our lensing scenario takes place in a convex normal neighborhood of our spacetime, then $\Delta(p,q)$ and $(d\sigma/d\tau|_0)^{-1}$ are both finite and positive: finite by \eqref{eqn:pos}, and positive by Corollary \ref{cor:conjugate}, the latter because $p$ will have no conjugate points along any radial geodesic through it.  Because $\mu > 0$, our definition is that of \emph{unsigned} magnification; regarding the parity of lensed images, see, e.g., \cite[p.~34]{schneider92}.  Having said that, note that while $\Delta(p,q)$ and $d\sigma/d\tau|_0$ are defined on our spacetime $M$, $d\sigma_0/d\tau|_0$ and $\Delta_0(p,q)$ need not be: ``the absence of the lens" may well imply a \emph{different smooth manifold $M_0 \neq M$,} not just a different metric on $M$.  Let us give an important example of when this is the case.

\np
If the mass distribution acting as lens is modeled by, say, the Schwarzschild spacetime, then the absence of this lens simply means the limit $m \to 0$, where $m$ is the mass parameter.  But the limit $m \to 0$ is Minkowski spacetime $\RR_1^4$, for which \eqref{eqn:mink1} dictates that $\Delta_0 \equiv 1$.\footnote{The notion of a ``limit of a spacetime" is generally coordinate-\emph{dependent;} see \cite{geroch69}.}  In fact if we assume that our source and observer are ``spatially constant," then we can easily determine the quantity $d\sigma_0/d\tau|_0$ as well.  Indeed, in terms of global normal coordinates $(x^i)$ in $\RR_1^4$, let us imagine the worldline $\gamma(\tau)$ of our light-emitting source as given simply by
$$
\gamma(\tau)\ =\ (\tau,0,0,0).
$$
In other words, our light source is ``spatially constant" in the sense that its trajectory is the integral curve of the time orientation $\partial/\partial x^0$ starting at $p = \gamma(0) = \vv{0}$.  Similarly, we imagine our trajectory, as observers, as also being spatially constant and hence given by some
$$
\gamma_o(s)\ =\ (s,r_1,r_2,r_3).
$$
This scenario in $\RR_1^4$ is now as follows: at $\tau = 0$, our light-emitting source emits null geodesics in all null directions, precisely one of which will reach us, say at $\gamma_o(s_*) = q$.  This scenario then repeats itself at every $\tau > 0$ thereafter, as both source and observer go forward in $x^0$ along their respective integral curves of $\partial/\partial x^0$.  By \eqref{eqn:pos},
\beqa
\label{eqn:ed}
\frac{d\sigma_0(\gamma(\tau),q)}{d\tau}\bigg|_{0}\ =\ -\ip{\alpha_{\vv{0}q}'(0)}{\gamma'(0)}{}.
\eeqa
But since $\alpha_{\vv{0}q}\colon[0,1] \lra \RR_1^4$ is given by
$$
\alpha_{\vv{0}q}(t)\ =\ (t\,s_*,t\,r_1,t\,r_2,t\,r_3),
$$
and since it is a null geodesic, it is easily verified that $d\sigma_0/d\tau|_0$ is just the ``spatial separation" between source and observer:
\beqa
\label{eqn:ss}
\frac{d\sigma_0(\gamma(\tau),q)}{d\tau}\bigg|_{0}\ =\ \sqrt{r_1^2+r_2^2+r_3^3}\ :=\ r_s.
\eeqa
Therefore, if we consider the ``spatially-scaled" magnification $\mu(p,q)/r_s^2 := \mu_{s}(p,q)$, then this would now be a function defined solely on Schwarzschild spacetime, and which, bearing \eqref{eqn:vleck2} and \eqref{eqn:pos} in mind, inspires the following definition:  

\begin{example}[{\rm Unsigned magnification; compact, isolated body, with Minkowski limit}]
Let $(M,g)$ be a spacetime modeling a compact, isolated gravitating body of mass $m$ whose limit $m \to 0$ is Minkowski spacetime.  Let $\gamma(\tau)$ be a future-pointing timelike curve starting at $\gamma(0) = p \in M$ and $\gamma_V(t)$ any future-pointing null geodesic starting at $p$ in the direction $V$, such that $\gamma_V(1) = q$ lies in a convex normal neighborhood of $p$.  Then the {\rm spatially-scaled magnification $\mu_{s}$ of $p$ at $q$} is
\beqa
\label{def:magcompact}
\mu_{s}(p,q)\ :=\ \frac{1}{g(\gamma'(0),{\rm exp}_p^{-1}(q))^{2}}\frac{1}{\mathcal{J}({\rm exp}_p)(q)},
\eeqa
where $\mathcal{J}({\rm exp}_p)$ is the Jacobian function of the exponential map ${\rm exp}_p$.
\end{example}

Let us make three remarks regarding this example.
(1) Implicit in this definition is the assumption that the family of metrics $g(m)$ depends smoothly on $m$ and has a well defined limit as $m \to 0$; as mentioned above, such an assumption is coordinate-dependent in general.
(2) Recall our observation of the parameter invariance of \eqref{eqn:pos}, so that one cannot scale the right-hand side of $\mu_s$ by simply rescaling the initial null direction $V$.
(3) In order for \eqref{result} to be well defined, a Lorentzian manifold $M_0$ representing the situation of ``absence of lens," as well as points $p, q \in M_0$ with which the points $p,q \in M$ can be identified in that absence, must be clearly defined\,---\,though we do not pursue this here, we point out that one way of making precise the identifications of such points is via the method in \cite{geroch69}.  Now, as  \eqref{def:magcompact} shows, when $M_0 = \RR_1^4$, the determination of the points $p,q \in \RR_1^4$ is unnecessary, since $\Delta_0 \equiv 1$, and we can scale away the ``spatial separation" arising from the factor $d\sigma_0/d\tau|_0$.  Indeed, because we are assuming that the portion of the light ray in which we are interested resides entirely within a (very large) convex normal neighborhood, so that there are no conjugate points along it, then a further benefit is the following.  Before stating this result, let us formalize what we mean by a \emph{light ray} in a spacetime.

\begin{definition}[{\rm Light ray}]
\label{def:lightray}
Let $(M,g)$ be a time oriented Lorentzian manifold.  A {\rm light ray} in $M$ is a complete, future-pointing null geodesic $\tilde{\gamma}$ satisfying ${\rm Ric}(\tilde{\gamma}',\tilde{\gamma}') \geq 0$ and whose orthogonal complement $\tilde{\gamma}^{\perp}$ is everywhere integrable. 
\end{definition}

\begin{theorem}
\label{cor:ft}
In a convex normal neighborhood $\cc$ of $p$, $\mu_s(p,\cdot)$ is monotonically increasing along any light ray through $p$.
\end{theorem}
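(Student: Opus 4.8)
The plan is to restrict everything to the light ray and convert the claim into a statement about the transverse focusing of the null bundle issuing from $p$. Write $V=\tilde{\gamma}'(0)$ and $q=\gamma_V(s)={\rm exp}_p(sV)$; by the scaling property of the exponential map ${\rm exp}_p^{-1}(q)=sV$, so the first factor of \eqref{def:magcompact} is exactly $g(\gamma'(0),sV)^2=s^2\,g(\gamma'(0),V)^2$, a fixed positive constant times $s^2$. For the second factor I would use the Jacobi-field form of $\mathcal J({\rm exp}_p)$ obtained at the end of Section \ref{sec:vv}, namely $\mathcal J({\rm exp}_p)(\gamma_V(s))=s^{-4}\sqrt{-\det[g(J_i(s),J_j(s))]}$, where $J_i$ is the Jacobi field along $\gamma_V$ with $J_i(0)=0$ and $J_i'(0)=E_i|_p$. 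Combining these gives $\mu_s(p,\gamma_V(s))=\frac{s^2}{g(\gamma'(0),V)^2\,\sqrt{-\det[g(J_i(s),J_j(s))]}}$.

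Next I would isolate the genuinely two-dimensional content. Along $\gamma_V$ introduce a parallel frame consisting of $\ell=\tilde{\gamma}'$, a conjugate null vector $n$ with $g(\ell,n)=-1$, and an orthonormal spacelike screen $m_1,m_2$ spanning $\ell^\perp\cap n^\perp$, and replace the $J_i$ by the Jacobi fields $A_\mu$ whose initial derivatives are these frame vectors; since the change of frame is metric preserving, the Gram determinant is unchanged. Because $A_0=s\,\tilde{\gamma}'$ and $g(\tilde{\gamma}',A_\mu)$ is linear in $s$, the Gram matrix $[g(A_\mu,A_\nu)]$ has $g(A_0,A_0)=0$ and the single coupling $g(A_0,A_1)=-s^2$ between the two null legs, so its determinant collapses to $-s^4\,\det[g(A_a,A_b)]_{a,b\in\{2,3\}}$ over the screen indices. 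Hence $\sqrt{-\det[g(J_i,J_j)]}=s^2\,\mathcal A(s)$ with $\mathcal A(s):=\sqrt{\det[g(A_a,A_b)]}$ the cross-sectional area of the null bundle, the $s$-powers cancel, and $\mu_s(p,\gamma_V(s))=\frac{1}{g(\gamma'(0),V)^2\,\mathcal A(s)}$. The theorem is thereby reduced to a monotonicity statement for the transverse area $\mathcal A$, equivalently to a sign condition on the null expansion $\theta=\tfrac{d}{ds}\ln\mathcal A$.

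The heart of the matter, and the step I expect to be hardest, is controlling $\theta$, and this is exactly where the two hypotheses of Definition \ref{def:lightray} enter. Integrability of $\tilde{\gamma}^\perp$ makes the bundle hypersurface orthogonal, killing the rotation term in the Raychaudhuri equation $\theta'=-\tfrac12\theta^2-|\sigma|^2-\text{Ric}(\tilde{\gamma}',\tilde{\gamma}')$, while $\text{Ric}(\tilde{\gamma}',\tilde{\gamma}')\ge0$ renders every remaining term nonpositive apart from the $-\tfrac12\theta^2$ Riccati piece. Comparing $\theta$ with the flat expansion $\theta_0=2/s$, which solves $\theta_0'=-\tfrac12\theta_0^2$ and carries the correct $s\to0^+$ singularity of a point source, yields $\theta\le\theta_0$ throughout $\cc$, where the absence of conjugate points established in Section \ref{sec:exp} keeps $\mathcal A>0$ so the comparison never degenerates; equivalently $\tfrac{d}{ds}\ln(\mathcal A/s^2)=\theta-\theta_0\le0$, so the normalized area decreases and the van Vleck determinant $\Delta=s^2/\mathcal A$ increases. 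The delicate remaining point is to convert this focusing inequality into the sign of $\mathcal A'$ itself that the relation $\mu_s=(g(\gamma'(0),V)^2\,\mathcal A)^{-1}$ demands, handling the singular point-source data as $s\to0^+$ (where $\theta\to+\infty$ and $\mathcal A\sim s^2$); this endpoint analysis, and pinning down exactly how the light-ray conditions force the needed monotonicity of $\mathcal A$, is where I expect the real obstruction to lie.
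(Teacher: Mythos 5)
Your reduction is correct and, once expanded, is exactly the content of the result the paper cites: writing $q=\gamma_V(s)$, the Gram--determinant computation in a parallel null frame does give $\sqrt{-\det[g(J_i(s),J_j(s))]}=s^2\mathcal{A}(s)$, hence $\Delta(p,\gamma_V(s))=s^2/\mathcal{A}(s)$ and $\mu_s(p,\gamma_V(s))=\bigl(g(\gamma'(0),V)^2\,\mathcal{A}(s)\bigr)^{-1}$, and the Raychaudhuri comparison $\theta\le 2/s$ (using ${\rm Ric}(\tilde{\gamma}',\tilde{\gamma}')\ge 0$ and integrability of $\tilde{\gamma}^{\perp}$) is precisely the statement that $\mathcal{A}(s)/s^2$ is non-increasing, i.e.\ that the van Vleck determinant is non-decreasing. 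That is all the paper's proof asserts: it is a one-line citation of Visser for the monotonicity of $\Delta$ along null geodesics, and it stops there.

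The ``delicate remaining point'' you flag is therefore not a technicality you failed to finish; it is a genuine obstruction, and it cannot be closed. Monotonicity of $\mu_s\propto 1/\mathcal{A}$ would require $\mathcal{A}$ itself to be non-increasing, but $\mathcal{A}(s)\sim s^2\to 0$ as $s\to 0^+$ while $\mathcal{A}>0$ on the rest of the ray (no conjugate points in $\cc$), so $\mathcal{A}$ must increase near the source and $\mu_s(p,\gamma_V(s))\to+\infty$ there. Equivalently, the factor $g(\gamma'(0),{\rm exp}_p^{-1}(q))^{-2}$ in \eqref{def:magcompact} decays like $s^{-2}$ along the ray (as the paper's own remark on parametrization invariance acknowledges, it cannot be normalized away), and monotonicity of $\Delta$ says nothing about the product $\Delta/s^2$; already in Minkowski space, where $\Delta\equiv 1$, $\mu_s$ is strictly decreasing. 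What your argument (and the cited one) actually establishes is monotonicity of $\Delta(p,\cdot)$, equivalently of $s^2\mu_s$ up to a constant along the ray --- the physically meaningful ``focusing'' statement that the flux exceeds, increasingly, its inverse-square benchmark. The literal claim of Theorem \ref{cor:ft} about $\mu_s$ does not follow, and the paper's proof elides exactly the step you correctly identified as the obstruction.
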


\begin{proof}
This follows from the fact that the van Vleck determinant is monotonically increasing along any light ray in $\cc$; for a proof, see \cite{Visser}.  
\end{proof}

We close this paper by pointing out two interesting directions for future work that arise within the framework that we have established here.  The first issue is to better understand the phenomenon of multiple lensed images in this setting.  In particular, it would be interesting to see if the occurrence of multiple lensed images can occur within a convex normal neighborhood.  A second related issue is to formulate a spacetime analogue of a magnification invariant associated to multiple lensed images.  Such invariant sums of the signed image magnification are known to occur for certain classes of lens systems in the standard approximation but there are, as yet, no generalizations applicable to spacetimes (see, e.g., \cite{PW} and the references therein).

\section*{Acknowledgements}
This work was supported by the Hakubi Center for Advanced Research, Kyoto University, Kyoto, Japan, and the World Premier International Research Center Initiative (WPI), MEXT, Japan.  The authors thank Eric Poisson and Matt Visser for very helpful discussions.

\bibliographystyle{siam}
\bibliography{Aazami-Werner-van-Vleck-FINAL}
\end{document}